\def\R {\ensuremath{\mathbb{R}}}
\def\Q {\ensuremath{\mathbb{Q}}}
\def\C {\ensuremath{\mathbb{C}}}
\def\x {\ensuremath{\mathbf{x}}}
\spnewtheorem{lem}[theorem]{Lemma}{\bfseries}{\itshape}
\spnewtheorem{prop}[theorem]{Proposition}{\bfseries}{\itshape}
\begin{document}

\title{Truth Table Invariant Cylindrical Algebraic Decomposition by Regular Chains}

\author{
Russell Bradford\inst{1},
Changbo Chen\inst{2},
James H. Davenport\inst{1},
Matthew England\inst{1},
Marc Moreno Maza\inst{3} 
and David Wilson\inst{1}
}

\institute{
$^1$  University of Bath, Bath, BA2 7AY, UK. \\
$^2$  CIGIT, Chinese Academy of Sciences, Chongqing, 400714, China. \\
$^3$  University of Western Ontario, London, Ontario, N6A 5B7, Canada. 
\\ 
	\email{ 
		{\tt \{R.Bradford, J.H.Davenport, M.England, D.J.Wilson\}@bath.ac.uk},\\ 
		{\tt moreno@csd.uwo.ca, changbo.chen@hotmail.com }
	}
}

\maketitle

\begin{abstract}
A new algorithm to compute cylindrical algebraic decompositions (CADs) is presented, building on two recent advances.  Firstly, the output is truth table invariant (a TTICAD) meaning given formulae have constant truth value on each cell of the decomposition.  Secondly, the computation uses regular chains theory to first build a cylindrical decomposition of complex space (CCD) incrementally by polynomial. Significant modification of the regular chains technology was used to achieve the more sophisticated invariance criteria.  Experimental results on an implementation in the \texttt{RegularChains} Library for {\sc Maple} verify that combining these advances gives an algorithm superior to its individual components and competitive with the state of the art.

\keywords{cylindrical algebraic decomposition; equational constraint; regular chains; triangular decomposition}
\end{abstract}

\section{Introduction}
\label{SEC:Introduction}

A \emph{cylindrical algebraic decomposition} (CAD) is a collection of cells such that: they do not intersect and their union describes all of $\R^n$; they are arranged \emph{cylindrically}, meaning the projections of any pair of cells are either equal or disjoint; and, each can be described using a finite sequence of polynomial relations.
  
CAD was introduced by Collins in \cite{Collins1975} to solve quantifier elimination problems, 
and this remains an important application (see \cite{CM14-qe} for details on how our work can be used there). 
Other applications  include epidemic modelling \cite{BENW06}, parametric optimisation \cite{FPM05}, theorem proving \cite{Paulson2012}, robot motion planning \cite{SS83II} and reasoning with multi-valued functions and their branch cuts \cite{DBEW12}.  
CAD has complexity doubly exponential in the number of variables.  While for some applications there now exist algorithms with better complexity (see for example \cite{BPR96}), CAD implementations remain the best general purpose approach for many.

We present a new CAD algorithm combining two recent advances: the technique of producing CADs via regular chains in complex space \cite{CMXY09}, and the idea of producing CADs closely aligned to the structure of logical formulae \cite{BDEMW13}.  We continue by reminding the reader of CAD theory and these advances.  

\subsection{Background on CAD}
\label{SUBSEC:Background}

We work with polynomials in ordered variables $\bm{x} = x_1\prec \ldots \prec x_n$.  
The \emph{main variable} of a polynomial (${\rm mvar}$) is the greatest variable present with respect to the ordering.  Denote by QFF a \emph{quantifier free Tarski formula}: a Boolean combination ($\land,\lor,\neg$) of statements 
$f_i \, \sigma \, 0$ where $\sigma \in \{=,>,<\}$ and the $f_i$ are polynomials.
CAD was developed as a tool for the problem of quantifier elimination over the reals: given a quantified Tarski formula 
\\
\hspace*{1in} $\Psi(x_1, \ldots, x_k) := Q_{k+1}x_{k+1}\ldots Q_nx_n F(x_1,\ldots,x_n)$ 
\\
(where $Q_i\in\{\forall,\exists\}$ and $F$ is a QFF), produce an equivalent QFF $\psi(x_1,\ldots,x_k)$. Collins proposed to build a CAD of $\R^n$ which is \emph{sign-invariant}, so each $f_i \in F$ is either positive, negative or zero on each cell. Then $\psi$ is the disjunction of the defining formulae of those cells $c \in \R^k$ where $\Psi$ is true, which given sign-invariance, requires us to only test one \emph{sample point} per cell.

Collins' algorithm works by first \emph{projecting} the problem into decreasing real dimensions and then \emph{lifting} to build CADs of increasing dimension. 
Important developments range from improved projection operators \cite{McCallum1999} to the use of certified numerics when lifting \cite{Strzebonski06} \cite{IYAY09}.  See for example \cite{BDEMW13} for a fuller discussion.

\subsection{Truth table invariant CAD}
\label{SUBSEC:BathWork}

One important development is the use of \emph{equational constraints} (ECs), which are equations logically implied by a formula.  These may be given explicitly as in $(f=0)\land \varphi$, or implicitly as $f_1f_2=0$ is by $(f_1 = 0 \land \varphi_1) \lor (f_2 = 0 \land \varphi_2)$.  

In \cite{McCallum1999} McCallum developed the theory of a reduced operator for the first projection, so that the CAD produced was sign-invariant for the polynomial defining a given EC, and then sign-invariant for other polynomials only when the EC is satisfied.  
Extensions of this to make use of more than one EC have been investigated (see for example \cite{BM05}) while in \cite{BDEMW14} it was shown how McCallum's theory could allow for further savings in the lifting phase.

The CADs produced are no longer sign-invariant for polynomials but instead \emph{truth-invariant} for a formula.  Truth-invariance was defined in \cite{Brown1998} where sign-invariant CADs were refined to maintain it.  We consider a related definition.
\begin{definition}[\cite{BDEMW13}]
Let $\Phi = \{ \phi_i\}_{i=1}^t$ be a list of QFFs.  A CAD is \emph{Truth Table Invariant} for $\Phi$ (a TTICAD) if on each cell every $\phi_i$ has constant Boolean value.
\end{definition}
In \cite{BDEMW13} an algorithm to build TTICADs when each $\phi_i$ has an EC was derived by extending \cite{McCallum1999} (which could itself apply in this case but would be less efficient).
Implementations in {\sc Maple} showed this offered great savings in both CAD size and computation time when compared to the sign-invariant theory. 
In \cite{BDEMW14} this theory has been extended to work on arbitrary $\phi_i$, with savings if at least one has an EC.
Note that there are two distinct reasons to build a TTICAD:
\begin{enumerateshort}
\item \emph{As a tool to build a truth-invariant CAD:}  If a parent formula $\phi^{*}$ is built from $\{\phi_i\}$ then any TTICAD for $\{\phi_i\}$ is also truth-invariant for $\phi^{*}$.  

A TTICAD may be the best truth-invariant CAD, or at least the best we can compute.  
Note that the TTICAD theory allows for more savings than the use of \cite{McCallum1999} with an implicit EC built as the product of ECs from $\phi_i$ \cite{BDEMW13}.

\item \emph{When truth table invariance is required:}  There are applications which provide a list of formulae but no parent formula.  For example, decomposing complex space according to a set of branch cuts for the purpose of algebraic simplification \cite{BD02} \cite{PBD10} \cite{EBDW13}.  When the branch cuts can be expressed as semi-algebraic systems a TTICAD provides exactly the required decomposition. 
\end{enumerateshort}

\subsection{CAD by regular chains}
\label{SUBSEC:UWOWork}

Recently, a radically different method to build CADs has been investigated.  
Instead of projecting and lifting, the problem is moved to complex space where the theory of triangular decomposition by regular chains is used to build a \emph{complex cylindrical decomposition} (CCD): a decomposition of $\C^n$ such that each cell is cylindrical.  
This is encoded as a tree data structure, with each path through the tree describing the end leaf as a solution of a regular system \cite{Wang2000}.

This was first proposed in \cite{CMXY09} to build a sign-invariant CAD.  
Techniques developed for comprehensive triangular decomposition \cite{CGLMP07} were used to build a sign-invariant decomposition of $\C^n$ which was then refined to a CCD.  
Finally, real root isolation is applied to refine further to a CAD of $\R^n$.  
The computation of the CCD may be viewed as an enhanced projection phase since gcds of pairs of polynomials are calculated as well as resultants.  
The extra work used here makes the second phase, which may be compared to lifting, less expensive.  
The main advantage is the use of case distinction in the second phase, so that the zeros of polynomials not relevant in a particular branch are not isolated there.

The construction of the CCD was improved in \cite{CM12b}.  
The former approach built a decomposition for the input in one step using existing algorithms.  
The latter approach proceeds incrementally by polynomial, each time using purpose-built algorithms to refine an existing tree whilst maintaining cylindricity.  
Experimental results showed that the latter approach is much quicker, with its implementation in
\textsc{Maple}'s \texttt{RegularChains} library now competing with existing state of the art CAD algorithms: \textsc{Qepcad} \cite{Brown2003b} and \textsc{Mathematica} \cite{Strzebonski10}.
One reason for this improvement is the ability of the new algorithm to recycle subresultant calculations, an idea introduced and detailed in~\cite{CM12a} for the purpose of decomposing polynomial systems into regular chains incrementally.

Another benefit of the incremental approach is that it allows for simplification when constructing a CAD in the presence of ECs.  Instead of working with polynomials, the algorithm can be modified to work with relations.  Then branches in which an EC is not satisfied may be truncated, offering the possibility of a reduction in both computation time and output size.  
In \cite{CM12b} it was shown that using this optimization allowed the algorithm to process examples which \textsc{Mathematica} and \textsc{Qepcad} could not.

\subsection{Contribution and outline}
\label{SUBSEC:Plan}

In Section \ref{SEC:Algorithm} we present our new algorithms.   
Our aim is to combine the savings from an invariance criteria closer to the underlying application, with the savings offered by the case distinction of the regular chains approach.  It requires adapting the existing algorithms for the regular chains approach so that they refine branches of the tree data structure only when necessary for truth-table invariance, and so that branches are truncated only when appropriate to do so.

We implemented our work in the \texttt{RegularChains} library for \textsc{Maple}.  In Section \ref{SEC:Comparison} we qualitatively compare our new algorithm to our previous work and in Section \ref{SEC:Experiments} we present experimental results comparing it to the state of the art.
Finally, in Section \ref{SEC:Conclusion} we give our conclusions and ideas for future work.

\section{Algorithm}
\label{SEC:Algorithm}

\subsection{Constructing a complex cylindrical tree}
\label{SUBSEC:CAlg}

Let $\bm{x}=x_1\prec\cdots\prec x_n$ be a sequence of ordered variables.  We will construct TTICADs of $\R^n$ for a semi-algebraic system $sas$ (Definition \ref{def:sas}).  However, to achieve this we first build CCDs of $\C^n$ with respect to a complex system.
\begin{definition}
\label{def:cs}
Let $F=\{p_1,\ldots,p_s\}$ be a finite set from $\Q[\bm{x}]$, $G\subseteq F$ and
$\sigma_i \in \{=, \neq \}$.  
Then we define a \emph{complex system} (denoted by $cs$) as a set 
\[
\{p_i~\sigma_i~0\mid p_i \in G\}\cup \{p_i\mid p_i \in F\setminus G\}.
\]
\end{definition}
The complex systems we work with will be defined in accordance with a semi-algebraic counterpart (see Definition \ref{def:sas}).
For $p \in \Q[\bm{x}]$ we denote the zero set of $p$ in $\C^n$ by $Z_{\C}(p)$, or $Z_{\C}(p=0)$, and its complement by $Z_{\C}(p\neq 0)$.

We compute CCDs as trees, following \cite{CMXY09,CM12b}.  
Throughout let $T$ be a rooted tree with each node of depth $i$ a polynomial constraint of type either, ``any $x_i$'' (with zero set defined as $\C^n$), or $p=0$, or $p\neq0$ (where $p\in\Q[x_1,\ldots,x_i]$).
For any $i$ denote the induced subtree of $T$ with depth $i$ by $T_i$.  
Let $\Gamma$ be a path of $T$ and define its zero set $Z_{\C}(\Gamma)$ as the intersection of zero sets of its nodes. The zero set of $T$, denoted $Z_{\C}(T)$, is defined as the union of zero sets of its paths. 

\begin{definition}
\label{def:CCT}
$T$ is a {\em complete complex cylindrical tree} (complete CCT) of $\Q[\bm{x}]$ if it satisfies recursively: 
\begin{enumerateshort}
\item If $n=1$: either $T$ has one leaf ``any $x_1$'', 
or it has $s+1$ ($s\geq 1$) leaves $p_1=0,\ldots,p_s=0, \prod_{i=1}^sp_i\neq 0$, 
where $p_i\in\Q[x_1]$ are squarefree and coprime.
\item The induced subtree $T_{n-1}$ is a complete CCT. 
\item For any given path $\Gamma$ of $T_{n-1}$, either its leaf $V$ has only one child ``any $x_n$'', 
or $V$ has $s+1$ ($s\geq 1$) children $p_1=0,\ldots,p_s=0, \prod_{i=1}^sp_i\neq 0$, 
where $p_1,\ldots,p_s\in\Q[\x]$ are squarefree and coprime satisfying: \label{listnum}
\item[\ref{listnum}a.]
for any $\alpha\in Z_{\C}(\Gamma)$, none of ${\rm lc}(p_{j}, x_n)$, $j=1,\ldots,s$, vanishes at $\alpha$, and
\item[\ref{listnum}b.]$p_{1}(\alpha,x_n), \ldots, p_{s}(\alpha,x_n)$ are squarefree and coprime.
\end{enumerateshort}
\end{definition}
The set $\{Z_{\C}(\Gamma)\mid \Gamma~\mbox{is a path of}~ T\}$ is called the {\em complex cylindrical decomposition} (CCD) of $\C^n$ associated with $T$: condition (\ref{listnum}b) assures that it is a decomposition.
Note that for a complete CCT we have $Z_{\C}(T)=\C^n$. A proper subtree rooted at the root node of $T$ of depth $n$ is called a {\em partial CCT} of $\Q[\bm{x}]$. We use CCT to refer to either a complete or partial CCT.
We call a complex cylindrical tree $T$ an \emph{initial tree} if $T$ has only one path and $T$ is complete.

\begin{definition}
Let $T$ be a CCT of $\C^n$ and $\Gamma$ a path of $T$.  
A polynomial $p \in \Q[\bm{x}]$ is \emph{sign invariant} on $\Gamma$ if either 
$Z_{\C}(\Gamma) \cap Z_{\C}(p)=\emptyset$ 
or 
$Z_{\C}(p) \supseteq Z_{\C}(\Gamma)$.
A constraint $p=0$ or $p\neq 0$ is \emph{truth-invariant} on $\Gamma$ if $p$ is sign-invariant on $\Gamma$.
A complex system $cs$ is \emph{truth-invariant} on $\Gamma$ if the conjunction of the constraints in $cs$ is truth-invariant on $\Gamma$, and each polynomial in $cs$ is sign-invariant on $\Gamma$.
\end{definition}

\begin{example}
Let $q := (x_2^2+x_2+x_1)$ and $p := x_1q$.  The following tree is a CCT such that $p$ is sign-invariant (and $p=0$ is truth invariant) on each path.
\begin{center}
\begin{small}
\qtreecenterfalse
\Tree [.r [.$4x_1-1=0$ $2x_2+1=0$ $2x_2+1\neq0$ ] [.$x_1=0$ {{\rm any} $x_2$} ] !\qsetw{1cm} [.$x_1(4x_1-1)\neq 0$ $q=0$ $q\neq0$ ]  ]
\end{small}
\end{center}
\end{example}

\noindent We introduce Algorithm \ref{Algo:CCD} to produce truth-table invariant CCTs, and new sub-algorithms \ref{Algo:IntersectLCS} and \ref{Algo:IntersectPolySet}.  It also uses \texttt{IntersectPath} and \texttt{NextPathToDo} from \cite{CM12b}.
\texttt{IntersectPath} takes: a CCT $T$; a path $\Gamma$; and $p$, either a polynomial or constraint.  When a polynomial it refines $T$ so $p$ is sign-invariant above each path from $\Gamma$ (still satisfying Definition \ref{def:CCT}). When a constraint it refines so the constraint is true, possibly truncating branches if there can be no solution.  This necessitates the housekeeping algorithm \texttt{MakeComplete} which restores to a complete CCT by simply adding missing siblings (if any) to every node.
\texttt{NextPathToDo} simply
returns the next incomplete path $\Gamma$ of $T$.  

\begin{prop}
\label{Prop:AlgILCS}
Algorithm~\ref{Algo:CCD} satisfies its specification.
\end{prop}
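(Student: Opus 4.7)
The plan is to prove correctness and termination of Algorithm~\ref{Algo:CCD} by induction on its main loop, maintaining a loop invariant that couples the structural conditions of Definition~\ref{def:CCT} with truth-invariance of the portion of the input complex system already processed. The base case is the initial tree, a single chain of ``any~$x_i$'' nodes, which is trivially a complete CCT with vacuously satisfied invariance condition.

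For the inductive step I would invoke the specifications of the sub-routines. Each iteration extracts an incomplete path $\Gamma$ via \texttt{NextPathToDo} and refines $T$ above it with \texttt{IntersectPath}. The correctness of \texttt{IntersectPath}, established in~\cite{CM12b}, guarantees that when given a polynomial $p$ it splits $\Gamma$ into children that satisfy conditions~(3a) and~(3b) of Definition~\ref{def:CCT} and along which $p$ is sign-invariant; when given a constraint it additionally truncates branches on which the constraint cannot hold. The housekeeping step \texttt{MakeComplete} then restores the missing ``$\prod p_i\neq 0$'' siblings needed for completeness. The main content of the proof then lies in Algorithms~\ref{Algo:IntersectLCS} and~\ref{Algo:IntersectPolySet}, which orchestrate these per-path refinements across the entire tree for a list of constraints and a set of polynomials respectively. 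Their correctness reduces to showing that iterating \texttt{IntersectPath} on every path returned by \texttt{NextPathToDo} both exhausts the incomplete paths and preserves the invariance already achieved for earlier items. Termination follows because the input is finite, \texttt{IntersectPath} terminates, and each iteration strictly reduces the set of paths still to be processed for the current item.

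The main obstacle will be the inductive preservation of invariance under refinement. When \texttt{IntersectPath} subdivides a path $\Gamma$ into children $\Gamma_1,\ldots,\Gamma_{s+1}$, the new leaves' zero sets partition $Z_{\C}(\Gamma)$; I must verify that sign-invariance of any previously processed polynomial on $\Gamma$ transfers to each $\Gamma_j$, and that truncation of a branch for a strict constraint removes only points outside the zero set of the constraint. This ensures the remaining tree still decomposes the intended subset of $\C^n$ and still witnesses truth-invariance of all constraints processed so far. This is precisely the delicate bookkeeping underpinning the regular-chains approach, and I would lean on the lemmas of~\cite{CMXY09} and~\cite{CM12b} rather than reprove them.
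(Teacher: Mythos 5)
There is a genuine gap. What you prove carefully is precisely the part the paper simply cites: that \texttt{IntersectPath} refines a path in accordance with Definition~\ref{def:CCT}, that sign-invariance of already-processed polynomials persists when a path is further subdivided (the zero set of a child path is contained in that of its parent), and that \texttt{MakeComplete} restores completeness. That machinery is inherited from \cite{CM12b} and would only yield the sign-invariant decomposition of that paper. What it does \emph{not} yield, and what your sketch never addresses, is why each complex system $cs\in L$ is \emph{truth-invariant} on every path even though Algorithm~\ref{Algo:IntersectLCS} deliberately stops processing the remaining constraints of $cs$ on some branches. This is the whole point of the new algorithm, and it rests on the case analysis in the last branch of Algorithm~\ref{Algo:IntersectLCS}: after \texttt{IntersectPath} makes the chosen equational-constraint polynomial $p$ sign-invariant, on a path $C$ with $p\neq 0$ the system $cs$ is identically false on $C$ and on every later refinement of $C$ (since sign-invariance of $p$ persists downward), so $cs$ may be dropped from $L$ there; on a path with $p=0$ the truth of $cs$ coincides with that of $cs':=cs\setminus\{p=0\}$, which the recursive call renders truth-invariant. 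Without this argument you have not shown the output meets the specification of Algorithm~\ref{Algo:CCD}; you have only shown it is a well-formed CCT on which the polynomials actually touched are sign-invariant.

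Two smaller points. First, the induction is naturally organised on the total number of constraints in $L$ (each recursive call of Algorithm~\ref{Algo:IntersectLCS} has strictly fewer), which also gives termination; Algorithm~\ref{Algo:CCD} itself has no main loop to induct on, so your proposed loop invariant on ``the portion of the input complex system already processed'' does not match the recursive structure and, being phrased for a single system, does not account for the interaction between the several systems in $L$. Second, the role of \texttt{MakeComplete} is narrower than you suggest: truncation arises when \texttt{IntersectPath} is called on \emph{constraints} inside Algorithm~\ref{Algo:IntersectPolySet} (the single-system case), and completeness must then be restored; in the equational-constraint branch $p$ is passed as a polynomial, so no truncation occurs there and the case split on $p=0$ versus $p\neq 0$ is what replaces it.
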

\begin{proof}
It suffices to show that Algorithm~\ref{Algo:IntersectLCS} is as specified.  First observe that Algorithm \ref{Algo:IntersectPolySet} just recursively calls \texttt{IntersectPath} on constraints and so its correctness follows from that of \texttt{IntersectPath}.  When called on ECs \texttt{IntersectPath} may return a partial tree and so \texttt{MakeComplete} must be used in line \ref{branchMC}.

Algorithm \ref{Algo:IntersectLCS} is clearly correct is its base cases, namely 
line~\ref{branch1}, line~\ref{branch2} and line~\ref{branch3}. 
It also clearly terminates since the input of each recursive call has less constraints. 
For each path $C$ of the refined $\Gamma$, by induction, it is sufficient to show that $cs$ is truth-invariant on $C$.   If $p\neq 0$ on $C$, then $cs$ is false on $C$.   If $p=0$ on $C$, then the truth of $cs$ is invariant since it is completely determined by the truth of $cs' := cs \setminus \{p=0\}$, invariant on $C$ by induction. 
\end{proof}

\begin{algorithm}[h]
\label{Algo:CCD}
\caption{${\sf TTICCD(L)}$}
\KwIn{
A list $L$ of complex systems of $\Q[\bm{x}]$. 
}
\KwOut{
A complete CCT $T$ with each $cs \in L$ truth-invariant on each path.
}
Create the initial CCT $T$ and let $\Gamma$ be its path\;
${\sf IntersectLCS}(L, \Gamma, T)$\;
\end{algorithm}

\begin{algorithm}[H]
\label{Algo:IntersectLCS}
\caption{${\sf IntersectLCS}(L, \Gamma, T)$}
\KwIn{
A CCT $T$ of $\Q[\bm{x}]$.
A path $\Gamma$ of $T$. 
A list of complex systems $L$.
}
\KwOut{
Refinements of $\Gamma$ and $T$ such that $T$ is complete, and $cs \in L$ is truth-invariant above each path of $\Gamma$.
}
\uIf{$L=\emptyset$}{
\texttt{return}\label{branch1}\;
}
\uElseIf{$|L|=1$}{
    Let $cs$ be the only complex system\;
    ${\sf IntersectPolySet}(cs, \Gamma, T)$\;\label{branch2}
    ${\sf MakeComplete}(T)$\; \label{branchMC}
}
\uElseIf{no $cs \in L$ has an equational constraint}{
    Let $F$ be the set of polynomials appearing in $L$\;
    ${\sf IntersectPolySet}(F, \Gamma, T)$\label{branch3}\;
}
\Else{
   Let $cs$ be a complex system of $L$ with an EC denoted $p=0$\label{step:choice}\;
   \label{branch4}  
   ${\sf IntersectPath}(p, \Gamma, T)$ \tcp{$\Gamma$ may become a tree}
   \While{$C := {\sf NextPathToDo}(\Gamma)\neq \emptyset$}{
          \uIf{$p=0$ on $C$}{
              $cs' := cs \setminus \{p=0\}$\;
                 ${\sf IntersectLCS}(L\setminus\{cs\}\cup \{cs'\}, C, T)$\;
          }
          \Else{
              ${\sf IntersectLCS}(L\setminus\{cs\}, C, T)$\;
          }
    }
}
\end{algorithm}
\,
\begin{algorithm}[H]
\label{Algo:IntersectPolySet}
\caption{${\sf IntersectPolySet}(F, \Gamma, T)$}
\KwIn{
A CCT $T$, a path $\Gamma$ and 
a set $F$ of polynomials (constraints).
}
\KwOut{
$T$ is refined and $\Gamma$ becomes a subtree.  
Each polynomial (constraint) of $F$ is sign (truth)-invariant above each path of $\Gamma$.
}
\lIf{$F=\emptyset$}{return\;}
Let $p\in F$; $F' := F \setminus \{p\}$\;
${\sf IntersectPath}(p, \Gamma, T)$; \tcp{$\Gamma$ may become a tree}
\If{$F'\neq \emptyset$}{
\While{$C := {\sf NextPathToDo}(\Gamma)\neq \emptyset$}{
         ${\sf IntersectPolySet}(F', C, T)$\;
    }
}
\end{algorithm}

\subsection{Illustrating the computational flow}
\label{SUBSEC:Flow}

Consider using Algorithm \ref{Algo:CCD} on input of the form \\
$\hspace*{0.8in} L = [cs_1, cs_2] := [\{f_1=0, g_1 \neq 0\}, \, \{f_2=0, g_2\neq 0\}].$\\
Algorithm \ref{Algo:CCD} constructs the initial tree and passes to Algorithm \ref{Algo:IntersectLCS}.  We enter the fourth branch of the conditional, let $p=f_1$, and refine to a sign invariant CCT for $f_1$.  This makes a case distinction between $f_1=0$ and $f_1\neq 0$. 
On the branch $f_1\neq 0$, we recursively call ${\sf IntersectLCS}$ on $[cs_2]$ which then passes directly to ${\sf IntersectPolySet}$.  
On the branch $f_1=0$, we recursively call ${\sf IntersectLCS}$ on $[\{g_1\neq 0\}, \{f_2=0, g_2\neq 0\}]$.
This time $p=f_2$ and a case discussion is made between $f_2=0$ and $f_2\neq 0$. 
On the branch $f_2 \neq 0$, we end up calling ${\sf IntersectPolySet}(g_1\neq 0)$ while on the branch $f_2=0$ we call ${\sf IntersectLCS}$ on $[\{g_1\neq 0\}, \{g_2\neq 0\}]$, which reduces to ${\sf IntersectPolySet}(g_1, g_2)$. 
The case discussion is summarised by:
\[
\left\{
\begin{array}{ll}
f_1=0: &
\left\{ 
\begin{array}{ll}
f_2=0: & g_1, g_2\\
f_2\neq 0: & g_1\neq 0
\end{array}
\right. \\
f_1\neq 0: & f_2=0, g_2\neq 0
\end{array}
\right..
\]

\subsection{Refining to a TTICAD}
\label{SUBSEC:RAlg}

We now discuss how Section \ref{SUBSEC:CAlg} can be extended from CCDs to CADs.  
\begin{definition}
\label{def:sas}
A \emph{semi-algebraic system} of $\Q[\bm{x}]$ ($sas$) is a set of constraints $\{p_i~\sigma_i~0\}$ where each $\sigma_i \in \{=, >, \geq,\neq \}$ and each $p_i \in \Q[\bm{x}]$.
A \emph{corresponding complex system} is formed 
by replacing all $p_i>0$ by $p_i \neq 0$ and all $p_i\geq 0$ by $p_i$.

A $sas$ is \emph{truth-invariant} on a cell if the conjunction of its constraints is.
\end{definition}
Note that the ECs of an sas are still identified as ECs of the corresponding cs.  Algorithm \ref{Algo:RCTTI} produces a TTICAD of $\R^n$ for a sequence of semi-algebraic systems.  \\

\begin{algorithm}[H]
\label{Algo:RCTTI}
\caption{${\sf RC-TTICAD(L)}$}
\KwIn{
A list $L$ of semi-algebraic systems of $\Q[\bm{x}]$. 
}
\KwOut{
A CAD such that each $sas \in L$ is truth-invariant on each cell.
}
Set $L'$ to be the list of corresponding complex systems\;
$\mathcal{D} := {\sf TTICCD}(L')$\;
\texttt{MakeSemiAlgebraic}($\mathcal{D}, n$)\;
\end{algorithm}

\begin{prop}
Algorithm \ref{Algo:RCTTI} satisfies its specification.
\end{prop}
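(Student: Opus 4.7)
The plan is to reduce correctness of Algorithm \ref{Algo:RCTTI} to the already-established correctness of \texttt{TTICCD} (Proposition \ref{Prop:AlgILCS}) plus a bridging lemma about \texttt{MakeSemiAlgebraic}. First, I would set up notation: for each $sas \in L$, let $cs \in L'$ denote its corresponding complex system under Definition \ref{def:sas}. Note that every polynomial appearing in $sas$ also appears in $cs$, and an equational constraint $p = 0$ in $sas$ is recorded as the same equational constraint in $cs$.

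Next I would apply Proposition \ref{Prop:AlgILCS} to conclude that after the call $\mathcal{D} := \texttt{TTICCD}(L')$, the tree $\mathcal{D}$ is a complete CCT in which every $cs \in L'$ is truth-invariant on each path. By the definition of truth-invariance for a complex system, this means that on each path $\Gamma$ of $\mathcal{D}$, every polynomial appearing in $cs$ is sign-invariant in the complex sense, i.e.\ either identically vanishes on $Z_{\C}(\Gamma)$ or nowhere vanishes on $Z_{\C}(\Gamma)$.

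The key step is then to invoke the specification of \texttt{MakeSemiAlgebraic} (from \cite{CM12b}): applied to $\mathcal{D}$, it produces a CAD of $\R^n$ whose cells refine $\{Z_{\C}(\Gamma)\cap\R^n \mid \Gamma \text{ a path of } \mathcal{D}\}$, and on each resulting real cell every polynomial that is complex-sign-invariant on the parent path $\Gamma$ has a constant real sign in $\{+,-,0\}$. Granting this, take any $sas \in L$ with corresponding $cs \in L'$: every polynomial $p$ in $sas$ lies in $cs$ and hence has constant sign on each CAD cell. Therefore every atomic constraint $p\,\sigma\,0$ with $\sigma\in\{=,>,\geq,\neq\}$ in $sas$ has constant truth value on each CAD cell, and so does their conjunction. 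This gives truth-invariance of $sas$ on each cell, which is the required specification.

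The main obstacle is the bridge from complex sign-invariance on a path of the CCT to real sign-constancy on each CAD cell, since a complex-nonvanishing polynomial on $Z_{\C}(\Gamma)$ can certainly change sign on $Z_{\C}(\Gamma)\cap\R^n$. However, this is precisely what \texttt{MakeSemiAlgebraic} is designed to handle by performing real root isolation cylinder by cylinder, and the property is exactly what was proved for the sign-invariant version in \cite{CMXY09,CM12b}; no modification of that argument is needed because \texttt{TTICCD} still outputs a complete CCT in which the polynomials of interest are complex-sign-invariant on every path. One small bookkeeping point worth mentioning in the write-up is that constraints of the form $p\geq 0$ and $p>0$ in $sas$ are only represented by the raw polynomial $p$ or by $p\neq 0$ in $cs$; this is sufficient because constant real sign of $p$ on a cell determines the truth of both $p\geq 0$ and $p>0$ there.
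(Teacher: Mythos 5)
Your overall route is the same as the paper's: invoke Proposition~\ref{Prop:AlgILCS} for the complex tree, then pass to $\R^n$ via \texttt{MakeSemiAlgebraic} of \cite{CMXY09,CM12b} (each complex cell meets $\R^n$ in a union of CAD cells) and bridge complex sign-invariance to real sign-constancy by noting that a polynomial which never vanishes on $Z_{\C}(\Gamma)$ cannot change sign on a connected real cell contained in $Z_{\C}(\Gamma)\cap\R^n$. The paper phrases this bridge as the brief remark that a sign change would force a zero (and hence a cell boundary), rather than as part of the stated specification of \texttt{MakeSemiAlgebraic}; that difference is presentational. Your bookkeeping remark about $p>0$ and $p\geq 0$ being represented by $p\neq 0$ and by the bare polynomial $p$ is also correct and worth keeping.

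The genuine weak point is the step where you claim that, on \emph{every} path $\Gamma$ of $\mathcal{D}$, \emph{every} polynomial appearing in $cs$ is sign-invariant. Algorithm~\ref{Algo:CCD} does not deliver this, and deliberately so: in Algorithm~\ref{Algo:IntersectLCS}, on branches where the selected equational constraint $p=0$ fails, the whole system $cs$ is dropped from the recursion, so its remaining polynomials (e.g.\ those coming from $g>0$ or $g\geq 0$) are never processed there --- this truncation is exactly the source of the savings over sign-invariant CAD (in the flow of Section~\ref{SUBSEC:Flow}, $g_1$ is untouched on the branch $f_1\neq 0$). What the proof of Proposition~\ref{Prop:AlgILCS} actually establishes, and what transfers to the CAD, is truth-invariance of the conjunction: on each path either some constraint of $cs$ is violated by a polynomial that \emph{is} sign-invariant there, or all polynomials of $cs$ are sign-invariant on the path. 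So your final step needs a case split per path $\Gamma$: if some constraint-defining polynomial $p$ of $cs$ is nowhere zero (resp.\ identically zero) on $Z_{\C}(\Gamma)$ in a way falsifying $cs$, then $p$ has constant sign on each real cell inside $Z_{\C}(\Gamma)\cap\R^n$ and the corresponding $sas$ is false on all of them; otherwise every polynomial of $cs$ is sign-invariant on $\Gamma$ and your sign-constancy argument applies verbatim. With that repair your conclusion stands and coincides with the paper's (terser) argument; as written, the intermediate claim overstates the output of \texttt{TTICCD}.
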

\begin{proof}
Algorithm \ref{Algo:RCTTI} starts by building the corresponding $cs$ for each $sas$ in the input.  It uses Algorithm \ref{Algo:CCD} to form a CCD truth-invariant for each of these and then the algorithm \texttt{MakeSemiAlgebraic} introduced in \cite{CMXY09} to move to a CAD.  
\texttt{MakeSemiAlgebraic} takes a CCD $\mathcal{D}$ and outputs a CAD $\mathcal{E}$ such that for each element $d \in \mathcal{D}$ the set $d \cap \R^n$ is a union of cells in $\mathcal{E}$.  Hence $\mathcal{E}$ is still truth-invariant for each $cs \in L'$.  It is also a TTICAD for $L$, (as to change sign from positive to negative would mean going through zero and thus changing cell).  
The correctness of Algorithm \ref{Algo:RCTTI} hence follows from the correctness of its sub-algorithms.
\end{proof}

The output of Algorithm \ref{Algo:RCTTI} is a TTICAD for the formula defined by each semi-algebraic system (the conjunction of the individual constraints of that system).  To consider formulae with disjunctions we must first use disjunctive normal form and then construct semi-algebraic systems for each conjunctive clause.

\section{Comparison with prior work}
\label{SEC:Comparison}

\noindent We now compare qualitatively to our previous work.  Quantitative experiments and a comparison with competing CAD implementations follows in Section \ref{SEC:Experiments}.

\subsection{Comparing with sign-invariant CAD by regular chains}
\label{SUBSEC:VsRC-CAD}

Algorithm \ref{Algo:RCTTI} uses work from \cite{CM12b} but obtains savings when building the complex tree by ensuring only truth-table invariance.  To demonstrate this we compare diagrams representing the number of times a constraint is considered when building a CCD for a complex system. 

\begin{definition}
Let $cs$ be a complex system.
We define the {\em complete (resp. partial) combination diagram} for $cs$, denoted by $\Delta_0(cs)$ (resp. $\Delta_1(cs)$), recursively: 
\begin{itemizeshort}
\item If $cs=\emptyset$, then $\Delta_i(cs)$ ($i=0,1$) is defined to be null. 
\item If $cs$ has any ECs then select one, $\psi$ (defined by a polynomial $f$), and define
\begin{align*}
\Delta_0(cs) &:= \left\{
\begin{array}{lr}
f=0     & \Delta_0(cs\setminus\{\psi\}) \\
f\neq 0 & \Delta_0(cs\setminus\{\psi\})
\end{array}
\right., \qquad
\Delta_1(cs) := \left\{
\begin{array}{lr}
f=0     & \Delta_1(cs\setminus\{\psi\}) \\
f\neq 0 & 
\end{array}
\right..
\end{align*}
\item Otherwise select a constraint $\psi$ (which is either of the form $f\neq 0$, or $f$) and for $i=0,1$ define \\
$ \hspace*{1.3in}
\Delta_i(cs) := \left\{
\begin{array}{lr}
f=0 & \Delta_i(cs\setminus\{\psi\})\\
f\neq 0 & \Delta_i(cs\setminus\{\psi\})\\
\end{array}
\right..
$
\end{itemizeshort}
\end{definition}
The combination diagrams illustrate the combinations of relations that must be analysed by our algorithms, with the partial diagram relating to Algorithm \ref{Algo:CCD} and the complete diagram the sign-invariant algorithm from \cite{CM12b}.

\begin{lem}
\label{Lemma:Counting}
Assume that the complex system $cs$ has $s$ ECs and $t$ constraints of other types. 
Then the number of constraints appearing in $\Delta_0(cs)$ is $2^{s+t+1}-2$, and the number appearing in $\Delta_1(cs)$ is $2(2^{t}+s)-2$. 
\end{lem}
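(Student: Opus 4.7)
The plan is to set up and solve straightforward recurrences on $s$ and $t$ obtained by unfolding the recursive definition of the diagrams $\Delta_0$ and $\Delta_1$. Writing $N_0(s,t)$ and $N_1(s,t)$ for the number of constraint labels appearing in $\Delta_0(cs)$ and $\Delta_1(cs)$ respectively (where $cs$ has $s$ ECs and $t$ other constraints), I first note that these counts are well-defined: the recursive definitions only branch on whether a selected constraint $\psi$ is an EC or not, so any two choices made along the recursion lead to diagrams with the same number of constraint labels, depending only on $(s,t)$.

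For $\Delta_0$ the situation is symmetric: whether $\psi$ is an EC or not, we add two labels ($f=0$ and $f\neq 0$) and then duplicate the recursive diagram on $cs\setminus\{\psi\}$ beneath each. This gives
\[
N_0(s,t) \;=\; 2 + 2\, N_0(s-1,t) \quad \text{if } s\geq 1, \qquad N_0(0,t) \;=\; 2 + 2\,N_0(0,t-1) \quad \text{if } t\geq 1,
\]
together with the base case $N_0(0,0)=0$. Setting $k=s+t$ and $M(k):=N_0(s,t)$, the single recurrence $M(k)=2+2M(k-1)$ with $M(0)=0$ is solved by $M(k)+2=2(M(k-1)+2)$, so $M(k)+2=2^{k+1}$, giving $N_0(s,t)=2^{s+t+1}-2$ as required.

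For $\Delta_1$ the EC case is asymmetric: selecting an EC $\psi$ adds the two labels $f=0$ and $f\neq 0$ but only the $f=0$ branch recurses, while selecting a non-EC behaves exactly as in the $\Delta_0$ case. Thus
\[
N_1(s,t) \;=\; 2 + N_1(s-1,t) \quad \text{if } s\geq 1, \qquad N_1(0,t) \;=\; 2 + 2\,N_1(0,t-1) \quad \text{if } t\geq 1,
\]
with $N_1(0,0)=0$. Solving the second recurrence first gives $N_1(0,t)=2^{t+1}-2$ by the same geometric argument used above, after which induction on $s$ in the first recurrence yields $N_1(s,t)=2s+N_1(0,t)=2s+2^{t+1}-2=2(2^t+s)-2$.

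The only subtlety is the well-definedness remark at the start: one must observe that the recursive definition of $\Delta_i$ permits an arbitrary choice of $\psi$ at each step, so a priori the number of labels could depend on the choices made. This is in fact the main (very mild) obstacle, but it is immediate once one notes that the recurrences depend only on whether an EC exists, i.e.\ on $(s,t)$, so the count is the same regardless of the order in which constraints are peeled off.
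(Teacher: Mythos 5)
Your proof is correct and takes essentially the same route as the paper: your recurrences are just the unrolled form of the paper's direct structural count ($\Delta_0(cs)$ is a full binary tree of depth $s+t$, while $\Delta_1(cs)$ contributes two labels per EC along a single surviving branch followed by a full binary tree on the $t$ remaining constraints), and both reduce to the same geometric series. Your explicit well-definedness remark and recurrence bookkeeping are simply a slightly more formal packaging of that argument.
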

\begin{proof}
The diagram $\Delta_0(cs)$ is a full binary tree with depth $s+t$.  Hence the number of constraints appearing is the geometric series $\sum_{i=1}^{s+t} 2^i = 2^{s+t+1}-2$.  

$\Delta_1(cs)$ will start with a binary tree for the ECs, with only one branch continuing at each depth, and thus involves $2s$ constraints.  The full binary tree for the other constraints is added to the final branch, giving a total of $2^{t+1}+2s-2$.
\end{proof}

\begin{definition}
Let $L$ be a list of complex systems.  
We define the {\em complete (resp. partial) combination diagram} of $L$, 
denoted by $\Delta_0(L)$ (resp. $\Delta_1(L)$) recursively:
If $L=\emptyset$, then $\Delta_i(L)$, $i=0,1$, is null. 
Otherwise let $cs$ be the first element of $L$.
Then $\Delta_i(L)$ is obtained by appending $\Delta_i(L \setminus \{cs\})$ to each leaf node of $\Delta_i(cs)$.
\end{definition}

\begin{theorem}
Let $L$ be a list of $r$ complex systems. 
Assume each $cs \in L$ has $s$ ECs and $t$ constraints of other types.
Then the number of constraints appearing in $\Delta_0(L)$ is 
$2^{r(s+t)+1}-2$ and the number of constraints appearing in $\Delta_1(L)$ is $N(r)=2(s+2^t)^r-2$.
\end{theorem}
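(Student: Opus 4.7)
The plan is to prove the two counts separately: the $\Delta_0$ count by a direct structural argument, and the $\Delta_1$ count by induction on $r$ using Lemma~\ref{Lemma:Counting} as the base case.

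For $\Delta_0(L)$, the key observation is that $\Delta_0(cs)$ is a \emph{full} binary tree of depth $s+t$ (every recursive step splits into two and descends, regardless of whether $\psi$ is an EC). Since the recursive definition of $\Delta_0(L)$ attaches a copy of $\Delta_0(L\setminus\{cs\})$ to every leaf, an easy induction on $r$ shows that $\Delta_0(L)$ is itself a full binary tree of depth $r(s+t)$. Hence the total number of constraint-labelled nodes is the geometric sum
\[
\sum_{i=1}^{r(s+t)} 2^i \;=\; 2^{r(s+t)+1} - 2.
\]

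For $\Delta_1(L)$, I would first count the number of leaves of $\Delta_1(cs)$. By the definition, each of the $s$ EC-steps produces one continuing child ($f=0$) and one \emph{terminal} child ($f\neq 0$), so the EC portion contributes $s$ terminal leaves together with one surviving branch; the remaining $t$ non-EC constraints then form a full binary tree of depth $t$ attached to that surviving branch, contributing $2^t$ further leaves. Thus $\Delta_1(cs)$ has exactly $s+2^t$ leaves. Combined with Lemma~\ref{Lemma:Counting}, which gives $2^{t+1}+2s-2$ constraint nodes inside $\Delta_1(cs)$, the recursive appending rule yields
\[
N(r) \;=\; (2^{t+1}+2s-2) + (s+2^t)\,N(r-1), \qquad N(0)=0.
\]
Verifying the base case $N(1)=2^{t+1}+2s-2=2(s+2^t)-2$ is immediate, and substituting the inductive hypothesis $N(r-1)=2(s+2^t)^{r-1}-2$ collapses the recurrence, after the $2^{t+1}+2s$ terms cancel with $2(s+2^t)$, to $N(r)=2(s+2^t)^r-2$.

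The main subtlety I anticipate is the leaf count for $\Delta_1(cs)$: one must be careful that the recursive branch $f\neq 0$ in the EC case genuinely terminates (its second slot in the display is empty), so it contributes a leaf rather than continuing, while in the non-EC case both branches continue. Once this is pinned down, everything else is routine induction and arithmetic.
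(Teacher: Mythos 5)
Your proposal is correct and follows essentially the same route as the paper: the $\Delta_0$ count via the geometric series for a full binary tree of depth $r(s+t)$, and the $\Delta_1$ count by induction on $r$ with Lemma~\ref{Lemma:Counting} as the base case and the recurrence $N(r)=N(1)+(s+2^t)N(r-1)$, justified by the $s+2^t$ leaves of $\Delta_1(cs)$ ($s$ terminal ``$f\neq 0$'' branches plus $2^t$ leaves of the non-EC subtree). Your explicit leaf count and verification of the cancellation simply fill in details the paper leaves implicit.
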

\begin{proof}
The number of constraints in $\Delta_0(L)$ again follows from the geometric series.  
For $\Delta_1(L)$ we proceed with induction on $r$. The case $r=1$ is given by Lemma \ref{Lemma:Counting}, so now assume $N(r-1)=2(s+2^t)^{r-1}-2$.  

The result for $r$ follows from $C(r) = C(1) + (s+2^t)C(r-1)$. To conclude this identity consider the diagram for the first $cs \in L$.  To extend to $\Delta_1(L)$ we append $\Delta_1(L \setminus {cs})$ to each end node.  There are $s$ for cases where an EC was not satisfied and $2^t$ from cases where all ECs were (and non-ECs were included).
\end{proof}

\begin{figure}[t]
\caption{The left is a sign-invariant CAD, and the right a TTICAD, for (\ref{eq:Form1}).}
\label{fig:Ex1}
\centering
\includegraphics[width=0.48\textwidth]{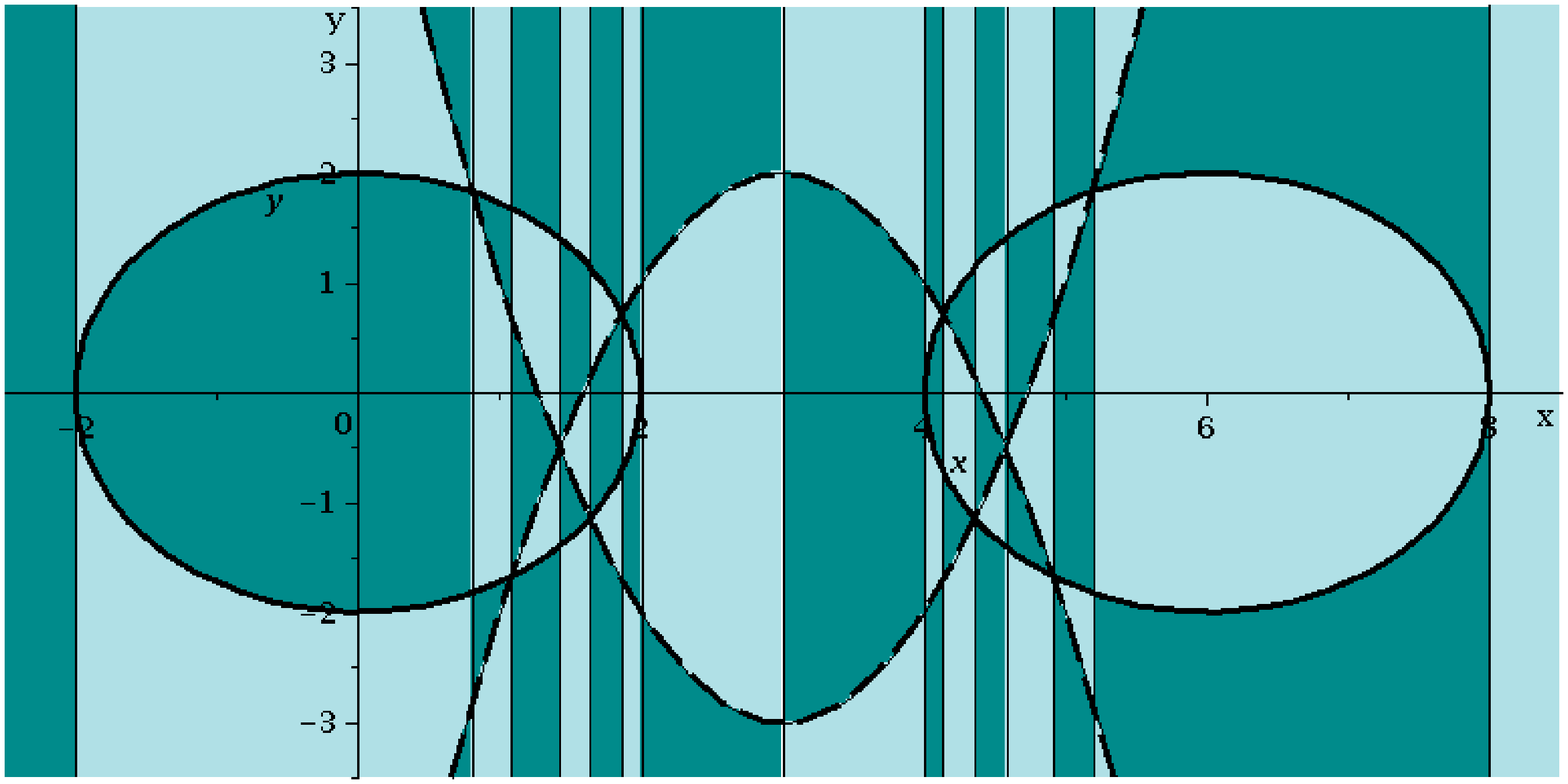}
\hspace*{0.02\textwidth}
\includegraphics[width=0.48\textwidth]{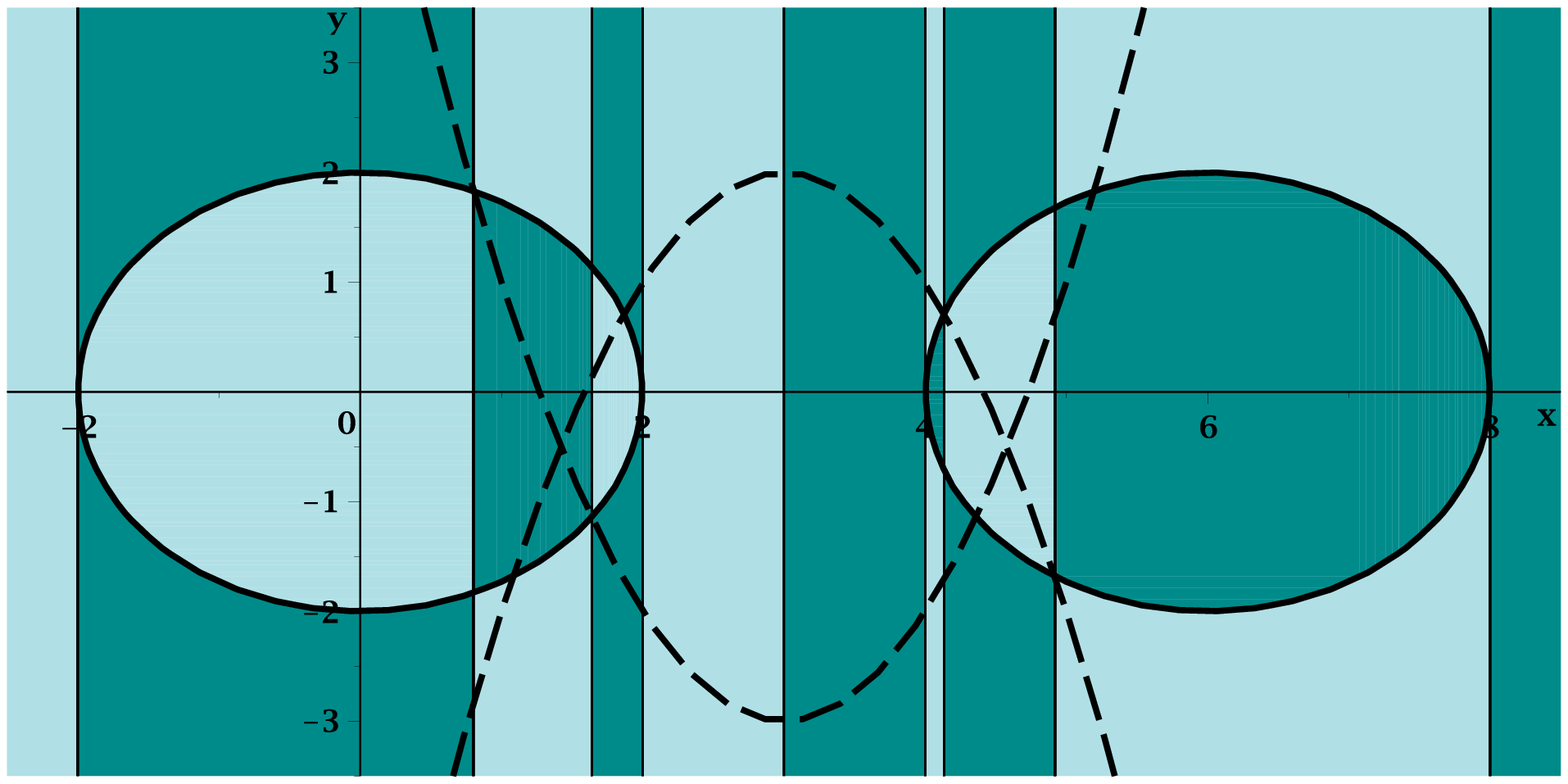}
\end{figure}

\begin{example}
\label{ex:1a}
\noindent We demonstrate these savings by considering
\begin{align}
f_1 &:= x^2+y^2-4,     \hspace*{0.44in} 
g_1 := (x-3)^2-(y+3),  \hspace*{0.1in} 
\phi_1 := f_1 = 0 \land g_1 < 0, \nonumber \\
f_2 &:= (x-6)^2+y^2-4, \,\,\,\,
g_2 := (x-3)^2+(y-2),  \,\,\,\, 
\phi_2 := f_2 = 0 \land g_2 < 0,
\label{eq:Form1}
\end{align}
and ordering $x \prec y$.  
The polynomials are graphed in Figure \ref{fig:Ex1} where the solid circles are the $f_i$ and the dashed parabola the $g_i$.
To study the truth of the formulae $\{\phi_1,\phi_2\}$ we could create a sign-invariant CAD.  Both the incremental regular chains technology of \cite{CM12b} and {\sc Qepcad} \cite{Brown2003b} do this with 231 cells.  The 72 full dimensional cells are visualised on the left of Figure \ref{fig:Ex1}, (with the cylinders on each end actually split into three full dimensional cells out of the view).

Alternatively we may build a TTICAD using Algorithm \ref{Algo:RCTTI} to obtain only 63 cells, 22 of which have full dimension as visualised on the right of Figure \ref{fig:Ex1}. By comparing the figures we see that the differences begin in the CAD of the real line, with the sign-invariant case splitting into 31 cells compared to 19.
The points identified on the real line each align with a feature of the polynomials.  Note that the TTICAD identifies the intersections of $f_i$ and $g_j$ only when $i=j$, and that no features of the inequalities are identified away from the ECs.
\end{example}

\subsection{Comparing with TTICAD by projection and lifting}
\label{SUBSEC:RCvsPL}

We now compare Algorithm \ref{Algo:RCTTI} with the TTICADs obtained by projection and lifting in \cite{BDEMW13}.  We identify three main benefits which we demonstrate by example. \\

\noindent \textbf{(I) Algorithm \ref{Algo:RCTTI} can achieve cell savings from case distinction.}
\begin{example}
\label{ex:1b}
Algorithm \ref{Algo:RCTTI} produces a TTICAD for (\ref{eq:Form1}) with 63 cells
compared to a TTICAD of 67 cells from the projection and lifting algorithm in \cite{BDEMW13}.  The full-dimensional cells are identical and so the image on the right of Figure \ref{fig:Ex1} is an accurate visualisation of both.  
To see the difference we must compare lower dimensional cells.  Figure \ref{fig:Ex1Zoom} compares the lifting to $\R^2$ over a cell on the real line aligned with an intersection of $f_1$ and $g_1$.    
The left concerns the algorithm in \cite{BDEMW13} and the right Algorithm \ref{Algo:RCTTI}.  The former isolates both the $y$-coordinates where $f_1=0$ while the latter only one (the single point over the cell where $\phi_1$ is true).  
\end{example}

\begin{figure}[t]
\caption{Comparing TTICADs for (\ref{eq:Form1}).  The left uses \cite{BDEMW13} and the right Algorithm \ref{Algo:RCTTI}.}
\label{fig:Ex1Zoom}
\centering
\includegraphics[width=0.27\textwidth]{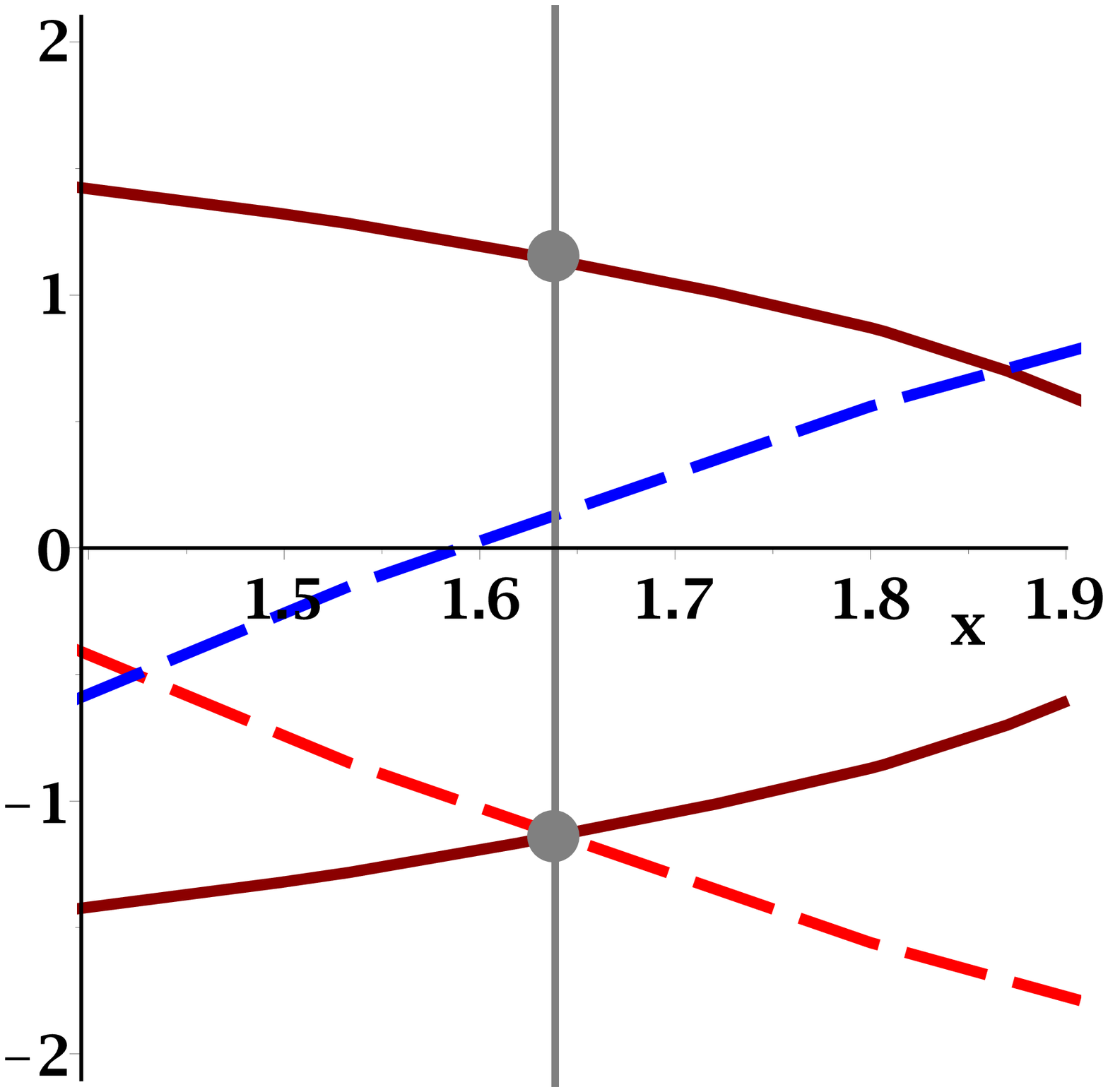}
\hspace*{0.2\textwidth}
\includegraphics[width=0.27\textwidth]{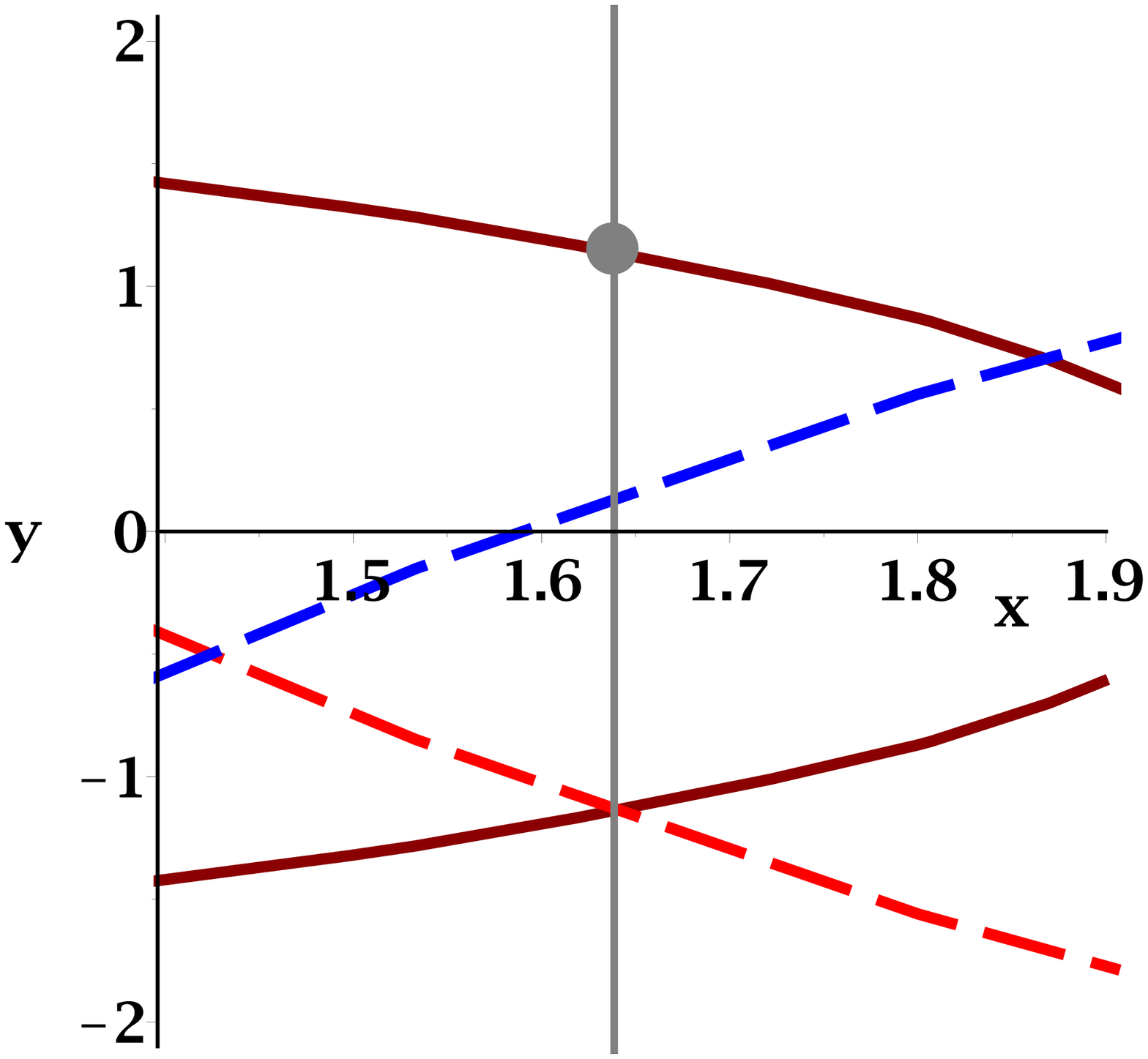}
\end{figure}

If we modified the problem so the inequalities in (\ref{eq:Form1}) were not strict then $\phi_1$ becomes true at both points and Algorithm \ref{Algo:RCTTI} outputs the same TTICAD as \cite{BDEMW13}.  Unlike \cite{BDEMW13}, the type of the non-ECs affects the behaviour of Algorithm \ref{Algo:RCTTI}.  

\vspace*{5pt}

\noindent \textbf{(II) Algorithm \ref{Algo:RCTTI} can take advantage of more than one EC per clause.}
\begin{example}
\label{ex:bieq}
We assume $x \prec y$ and consider
\begin{align}
f_1 &:= x^2+y^2-1, \qquad h := y^2-\tfrac{x}{2}, \qquad g_1 := xy - \tfrac{1}{4}    
\nonumber  \\
f_2 &:= (x-4)^2+(y-1)^2-1  \quad g_2 := (x-4)(y-1) - \tfrac{1}{4}, 
\nonumber \\
\phi_1 &:= h=0 \land f_1 = 0 \land g_1 < 0, \,\, \phi_2 := f_2 = 0 \land g_2 < 0.
\label{eq:Form2}
\end{align}
The polynomials are graphed in Figure \ref{fig:Ex2} where the dashed curves are $f_1$ and $h$, the solid curve is $f_2$ and the dotted curves are $g_1$ and $g_2$.  A TTICAD produced by Algorithm \ref{Algo:RCTTI} has 69 cells and is visualised on the right of Figure \ref{fig:Ex2} while a TTICAD produced by projection and lifting has 117 cells and is visualised on the left.  This time the differences are manifested in the full-dimensional cells.  

The algorithm from \cite{BDEMW13} works with a single designated EC in each QFF (in this case we chose $f_1$) and so treats $h$ in the same way as $g_1$.  This means for example that all the intersections of $h$ or $g_1$ with $f_1$ are identified.  By comparison, Algorithm \ref{Algo:RCTTI} would only identify the intersection of $g_1$ with an EC if this occurred at a point where both $f_1$ and $h$ were satisfied (does not occur here). 
For comparison, a sign-invariant CAD using {\sc Qepcad} or \cite{CM12b} has 611 cells.  
\end{example}

To use \cite{BDEMW13} we had to designate either $f_1$ or $h$ as the EC.  Choosing $f_1$ gave 117 cells and $h$ 163.  Our new algorithm has similar choices: what order should the systems be considered and what order the ECs within (step \ref{step:choice} of Algorithm \ref{Algo:IntersectLCS})?  Processing $f_1$ first gives 69 cells but other choice can decrease this to 65 or increase it to 145.  See \cite{EBCDMW14} for advice on making such choices intelligently.

\begin{figure}[t]
\caption{TTICAD for (\ref{eq:Form2}).  The left uses \cite{BDEMW13} and the right Algorithm \ref{Algo:RCTTI}.}
\label{fig:Ex2}
\centering
\includegraphics[width=0.48\textwidth]{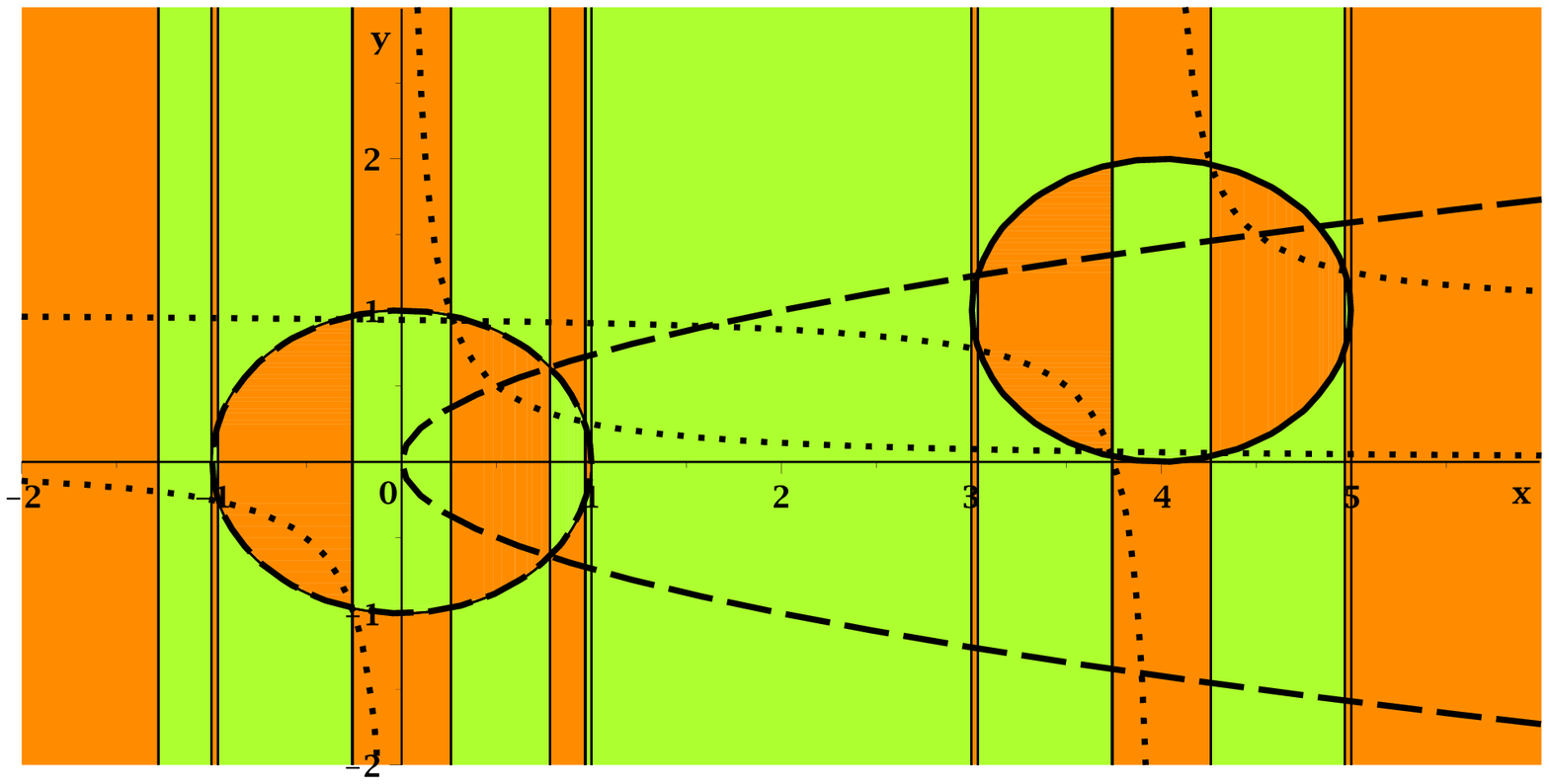}
\hspace*{0.02\textwidth}
\includegraphics[width=0.48\textwidth]{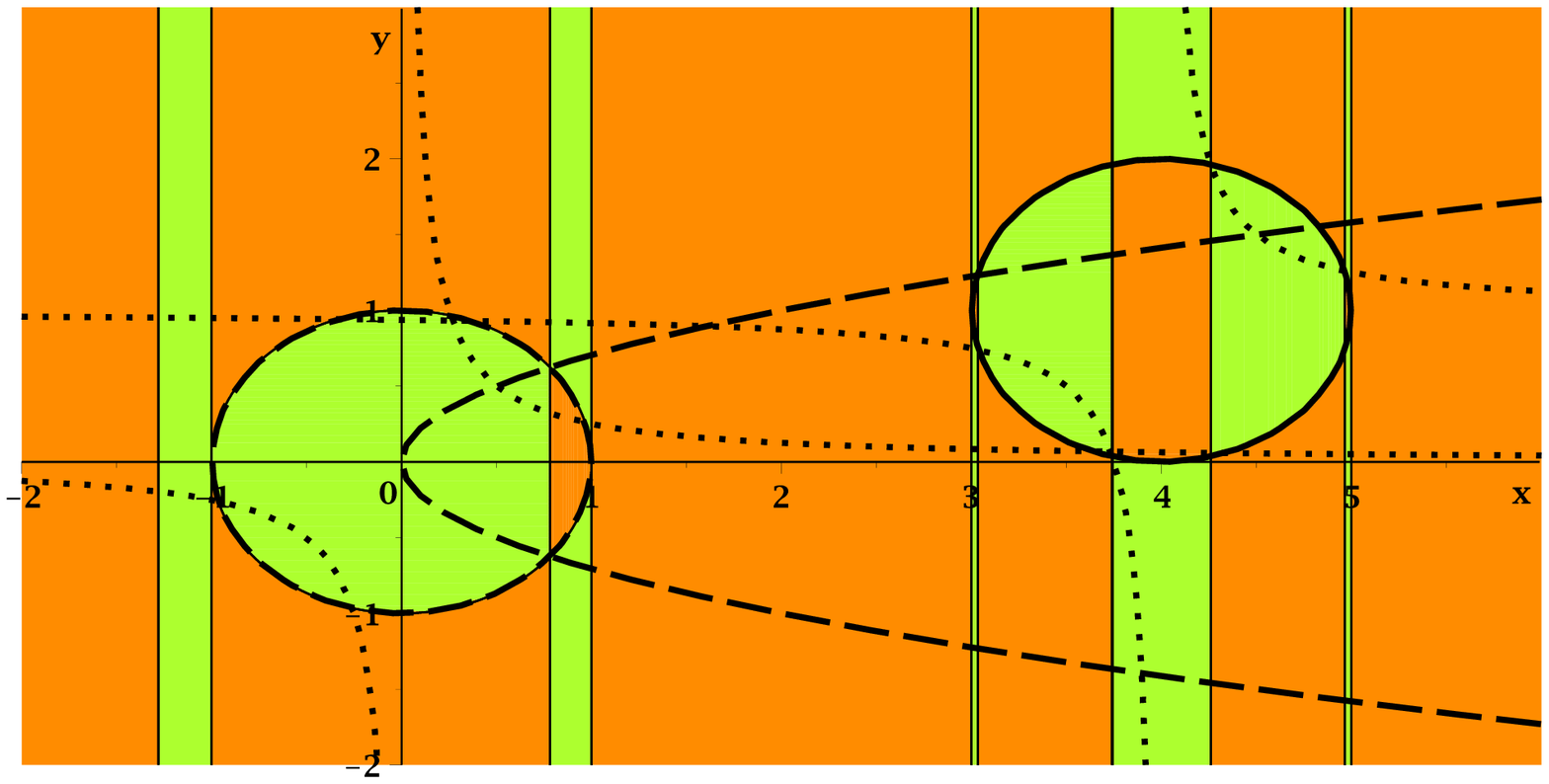}
\end{figure}

\vspace*{5pt}

\noindent \textbf{(III) Algorithm \ref{Algo:RCTTI} will succeed given sufficient time and memory.}

This contrasts with the theory of reduced projection operators used in \cite{BDEMW13}, where input must be \emph{well-oriented} (meaning that certain projection polynomials cannot be nullified when lifting over a cell with respect to them).
\begin{example}
\label{ex:bcwo}
Consider the identity $\sqrt{z}\sqrt{w} = \sqrt{zw}$ over $\mathbb{C}^2$.  We analyse its truth by decomposing according to the branch cuts and testing each cell at its sample point.  
Letting $z=x+\rm{i}y, w = u + \rm{i}v$ we see that branch cuts occur when 
$
\hspace*{0.5in} (y = 0 \land x < 0) \lor (v = 0 \land u < 0) \lor (yu+xv = 0 \land xu-yv < 0).
$\\
We desire a TTICAD for the three clauses joined by disjunction.  Assuming $v \prec u \prec x \prec y$ Algorithm \ref{Algo:RCTTI} does this using 97 cells, while the projection and lifting approach identifies the input as not well-oriented.  The failure is triggered by $yu+xv$ being nullified over a cell where $u=x=0$ and $v<0$.  
\end{example}

\section{Experimental Results}
\label{SEC:Experiments}

We present experimental results obtained on a Linux desktop (3.1GHz Intel processor, 8.0Gb total memory).  
We tested on 52 different examples, with a representative subset of these detailed in Table \ref{tab:Results}. 
The examples and other supplementary information are available from \url{http://opus.bath.ac.uk/38344/}.
One set of problems was taken from CAD papers \cite{BH91} \cite{BDEMW13} and a second from system solving papers \cite{CGLMP07} \cite{CM12b}.  The polynomials from the problems were placed into different logical formulations: disjunctions in which every clause had an EC (indicated by $\dagger$) and disjunctions in which only some do (indicated by $\dagger\dagger$).  A third set was generated by branch cuts of algebraic relations: addition formulae for elementary functions and examples previously studied in the literature.  

Each problem had a declared variable ordering
(with $n$ the number of variables).  For each experiment a CAD was
produced with the time taken (in seconds) and number of cells (cell
count) given. The first is an obvious metric and the second crucial
for applications acting on each cell.
T/O indicates a time out (set at $30$ minutes), FAIL a failure due to theoretical reasons such as input not being well-oriented (see \cite{McCallum1999} \cite{BDEMW13}) and Err an unexpected error.

We start by comparing with our previous work (all tested in \textsc{Maple} 18) by considering the first five algorithms in Table \ref{tab:Results}.  
RC-TTICAD is Algorithm \ref{Algo:RCTTI}, PL-TTICAD the algorithm from \cite{BDEMW13}, PL-CAD CAD with McCallum projection, RC-Inc-CAD the algorithm from \cite{CM12b} and RC-Rec-CAD the algorithm from \cite{CMXY09}.  Those starting RC are part of the \texttt{RegularChains} library and those starting PL the \texttt{ProjectionCAD} package \cite{EWBD14}.  RC-Rec-CAD is a modification of the algorithm currently distributed with \textsc{Maple}; the construction of the CCD is the same but the conversion to a CAD has been improved.  Algorithms RC-TTICAD and RC-Rec-CAD are currently being integrated into the {\tt RegularChains} library, which can be downloaded from \url{www.regularchains.org}.

We see that RC-TTICAD never gives higher cell counts than any of our previous work and that in general the TTICAD theories allow for cell counts an order of magnitude lower.  
RC-TTICAD is usually the quickest in some cases offering vast speed-ups.  
It is also important to note that there are many examples where PL-TTICAD has a theoretical failure but for which RC-TTICAD will complete (see point (III) in Section \ref{SUBSEC:RCvsPL}).  Further, these failures largely occurred in the examples from branch cut analysis, a key application of TTICAD.

We can conclude that our new algorithm combines the good features of our previous approaches, giving an approach superior to either.  We now compare with competing CAD implementations, detailed in the last four columns of Table \ref{tab:Results}: {\sc Mathematica} \cite{Strzebonski10} (V9 graphical interface); {\sc Qepcad-B} \cite{Brown2003b} (v1.69 with options {\tt +N500000000} {\tt +L200000}, initialization included in the timings and implicit EC declared when present); the \textsc{Reduce} package \textsc{Redlog} \cite{DSS04} (2010 Free CSL Version); and the \textsc{Maple} package \textsc{SyNRAC} (2011 version) \cite{IYAY09}.

As reported in \cite{BDEMW13}, the TTICAD theory allows for lower cell
counts than \textsc{Qepcad} even when manually declaring an EC.  We found that both \textsc{SyNRAC} and \textsc{Redlog} failed for many examples, (with \textsc{SyNRAC} returning unexpected errors and \textsc{Redlog} hanging with no output or messages).  There were examples for which
\textsc{Redlog} had a lower cell count than RC-TTICAD due to the use
of partial lifting techniques, but this was not the case in general.
We note that we were using the most current public version of
\textsc{SyNRAC} which has since been replaced by a superior
development version, (to which we do not have access) and that
\textsc{Redlog} is mostly focused on the virtual substitution approach
to quantifier elimination but that we only tested the CAD command.

{\sc Mathematica} is the quickest in general, often impressively so.
However, the output there is not a CAD but a formula with a
cylindrical structure \cite{Strzebonski10} (hence cell counts are not
available).  Such a formula is sufficient for many applications (such
as quantifier elimination) but not for others (such as algebraic
simplification by branch cut decomposition).  Further, there are 
examples for which RC-TTICAD completes but {\sc Mathematica} times
out.  {\sc Mathematica}'s output only references the CAD cells for
which the input formula is true.  
Our implementation can be modified to do this and in some cases this
can lead to significant time savings; we will investigate this further
in a later publication.

Finally, note that the TTICAD theory allows algorithms to change
with the logical structure of a problem.  For example,
Solotareff$\dagger$ is simpler than Solotareff$\dagger\dagger$ (it has an inequality instead of an equation).  A smaller
TTICAD can hence be produced, while sign-invariant algorithms give
the same output.

\section{Conclusions and further work}
\label{SEC:Conclusion}

We presented a new CAD algorithm which uses truth-table invariance, to give output aligned to underlying problem, and regular chains technology, bringing the benefits of case distinction and no possibility of theoretical failure from well-orientedness conditions.
However, there are still many questions to be considered:
\begin{itemizeshort}
\item Can we make educated choices for the order systems and constraints are analysed by the algorithm?  Example \ref{ex:bieq} and \cite{EBCDMW14} shows this could be beneficial.
\item Can we use heuristics to make choices such as what variable ordering (see current work in \cite{EBDW14} and previous work in \cite{DSS04} \cite{BDEW13}).
\item Can we modify the algorithm for the case of providing truth invariant CADs for a formula in disjunctive normal form?  In this case we could cease refinement in the complex tree once a branch is known to be true.
\item Can we combine with other theory such as partial CAD \cite{CH91} or cylindrical algebraic sub-decompositions \cite{WBDE14}?  
\end{itemizeshort}

\subsubsection*{Acknowledgements}
Supported by the CSTC (grant cstc2013jjys0002), the EPSRC (grant EP/J003247/1) and the NSFC (grant 11301524).

\begin{sidewaystable}[p]
\centerline{
\begin{tabular}{lcccccccccccccccccc}
 &
 & \multicolumn{2}{c}{RC-TTICAD} 
 & \multicolumn{2}{c}{RC-Inc-CAD} 
 & \multicolumn{2}{c}{RC-Rec-CAD} 
 & \multicolumn{2}{c}{PL-TTICAD} 
 & \multicolumn{2}{c}{PL-CAD} 
 & \multicolumn{1}{c}{\textsc{Mathematica}} 
 & \multicolumn{2}{c}{\textsc{Qepcad}} 
 & \multicolumn{2}{c}{\textsc{SyNRAC}} 
 & \multicolumn{2}{c}{\textsc{Redlog}}  
 \\
\cmidrule(lr){3-4}
\cmidrule(lr){5-6}
\cmidrule(lr){7-8}
\cmidrule(lr){9-10}
\cmidrule(lr){11-12}
\cmidrule(lr){13-13}
\cmidrule(lr){14-15}
\cmidrule(lr){16-17}
\cmidrule(lr){18-19}
Problem & n
& Cells & Time 
& Cells & Time 
& Cells & Time 
& Cells & Time 
& Cells & Time 
& Time 
& Cells & Time 
& Cells & Time 
& Cells & Time  
\\
\midrule
Intersection$\dagger$ & 3 
& 541   & 1.0            
& 3723 & 12.0   
& 3723 & 19.0        
& 579 & 3.5 
& 3723 & 29.5
& 0.1  
& 3723  & 4.9   
& 3723  & 12.8 
& Err & --- 
\\
Ellipse$\dagger$ & 5
& 71231  & 317.1         
& 81183 & 544.9 
& 81193 & 786.8 
& FAIL & --- 
& FAIL & --- 
& 11.2 
& 500609 & 275.3 
& Err  & ---    
& Err & --- 
\\
Solotareff$\dagger$ & 4
& 2849   & 8.8           
& 54037  & 209.1   
& 54037 & 539.0 
& FAIL & --- 
& 54037 & 407.6
& 0.1 
& 16603  & 5.2   
& Err  & ---    
& 3353 & 8.6 
\\
Solotareff$\dagger \dagger$ & 4
& 8329    & 21.4              
& 54037 & 226.9 
& 54037 & 573.4
& FAIL & --- 
& 54037 & 414.3
& 0.1  
& 16603  & 5.3   
& Err  & ---    
& 8367 & 13.6 
\\
2D Ex$\dagger$ & 2
& 97       & 0.2                 
& 317 & 1.0   
& 317 & 2.6   
& 105 & 0.6 
& 317 & 1.8
& 0.0 
& 249   & 4.8   
& 317   & 1.1  
& 305 & 0.9 
\\
2D Ex$\dagger \dagger$ & 2
& 183      & 0.4                
& 317 & 1.1
& 317 & 2.6             
& 183 & 1.1 
& 317 & 1.8
& 0.0  
& 317   & 4.6   
& 317   & 1.2  
& 293 & 0.9 
\\
3D Ex$\dagger$ & 3
& 109     & 3.5                 
& 3497 & 63.1   
& 3525 & 1165.7   
& 109 & 2.9 
& 5493 & 142.8
& 0.1  
& 739  & 5.4   
& ---   & T/O    
& Err & --- 
\\
MontesS10  & 7
& 3643   & 19.1             
& 3643 & 28.3 
& 3643 & 26.6      
& --- & T/O 
& --- & T/O
& T/O       
& --- & T/O     
& ---   & T/O    
& Err & --- 
\\
Wang 93  & 5
& 507    & 44.4              
& 507 & 49.1
& 507 & 46.9                           
& --- & T/O 
& T/O
& --- & 897.1 
& FAIL   & ---     
& Err  & ---    & Err & --- 
\\
Rose$\dagger$  & 3
& 3069     & 200.9               
& 7075 & 498.8 
& 7075 & 477.1                                 
& --- & T/O 
& --- & T/O
& T/O       
& FAIL & ---        
& ---   & T/O    
& Err & --- 
\\
genLinSyst-3-2$\dagger$  & 11
& 222821 & 3087.5 
& --- & T/O                 
& --- & T/O                                    
& FAIL & --- 
& FAIL & --- 
& T/O       
& FAIL   & ---     
& Err & ---    
& Err & --- 
\\
BC-Kahan & 2
& 55     & 0.2                 
& 409 & 2.4         
& 409 & 4.9        
& 55 & 0.2 
& 409 & 2.4
& 0.1  
& 261   & 4.8   
& 409   & 1.5  
& Err & ---  
\\
BC-Arcsin & 2
& 57     & 0.1                 
& 225 & 0.9   
& 225 & 1.9     
& 57 & 0.2 
& 225 & 0.9 
& 0.0  
& 225   & 4.8   
& 225   & 0.7  
& 161 & 2.4 
\\
BC-Sqrt  & 4
& 97     & 0.2                   
& 113 & 0.5 
& 113 & 1.3     
& FAIL & --- 
& 113 & 0.6
& 0.0  
& 105   & 4.7   
& 105   & 0.4  
& 73 & 0.0 
\\
BC-Arctan  & 4
& 211   & 3.5                 
& --- & T/O                
& --- & T/O        
& FAIL & --- 
& --- & T/O
& T/O         
& ---  & T/O      
& Err  & ---    
& --- & T/O 
\\
BC-Arctanh  & 4
& 211    & 3.5                    
& --- & T/O                
& --- & T/O               
& FAIL & --- 
& --- & T/O 
& T/O          
& ---  & T/O      
& Err  & ---    
& --- & T/O 
\\
BC-Phisanbut-1  & 4
& 325    & 0.8                
& 389 & 1.8      
& 389 & 5.8   
& FAIL & --- 
& 389 & 3.6
& 0.1  
& 377   & 4.8   
& 389   & 2.0  
& 217 & 0.2 
\\
BC-Phisanbut-4  & 4
& 543    & 1.6            
& 2007 & 13.6    
& 2065 & 21.5 
& FAIL & --- 
& 51763 & 932.5
& 11.9 
& 51763 & 8.6   
& Err  & ---    
& Err & --- 
\end{tabular}
}
\caption{Comparing our new algorithm to our previous work and competing CAD implementations. }
\label{tab:Results}
\end{sidewaystable}


\begin{thebibliography}{10}

\bibitem{BD02}
R.~Bradford and J.H. Davenport.
\newblock Towards better simplification of elementary functions.
\newblock In {\em Proc. ISSAC '02}, pp~16--22. ACM, 2002.

\bibitem{BDEMW13}
R.~Bradford, J.H. Davenport, M.~England, S.~McCallum, and D.~Wilson.
\newblock Cylindrical algebraic decompositions for boolean combinations.
\newblock In {\em Proc. ISSAC '13}, pp~125--132. ACM, 2013.

\bibitem{BDEMW14}
R.~Bradford, J.H. Davenport, M.~England, S.~McCallum, and D.~Wilson.
\newblock Truth table invariant cylindrical algebraic decomposition.
\newblock Preprint: \texttt{arXiv:1401.0645}.

\bibitem{BDEW13}
R.~Bradford, J.H. Davenport, M.~England, and D.~Wilson.
\newblock Optimising problem formulations for cylindrical algebraic decomposition.
\newblock In {\em Intelligent Computer Mathematics}, (LNCS vol. 7961), pp~19--34. Springer Berlin Heidelberg, 2013.

\bibitem{BPR96}
S.~Basu, R.~Pollack, and M.F. Roy.
\newblock Algorithms in Real Algebraic Geometry.  (Volume 10 of Algorithms and Computations in Mathematics).
\newblock Springer-Verlag, 2006.

\bibitem{Brown1998}
C.W. Brown.
\newblock Simplification of truth-invariant cylindrical algebraic
  decompositions.
\newblock In {\em Proc. ISSAC '98}, pp~295--301. ACM, 1998.

\bibitem{Brown2003b}
C.W. Brown.
\newblock An overview of {QEPCAD B}: A program for computing with semi-algebraic sets using CADs.
\newblock {\em SIGSAM Bulletin}, 37(4):97--108, ACM, 2003.

\bibitem{BENW06}
C.W. Brown, M.~El Kahoui, D.~Novotni, and A.~Weber.
\newblock Algorithmic methods for investigating equilibria in epidemic
  modelling.
\newblock {\em J. Symb. Comp.}, 41:1157--1173, 2006.

\bibitem{BM05}
C.W. Brown and S.~McCallum.
\newblock On using bi-equational constraints in {CAD} construction.
\newblock In {\em Proc. ISSAC '05}, pp~76--83. ACM, 2005.

\bibitem{BH91}
B.~Buchberger and H.~Hong.
\newblock Speeding up quantifier elimination by {G}r\"{o}bner bases.
\newblock Technical report, 91-06. RISC, Johannes Kepler University, 1991.

\bibitem{CGLMP07}
C.~Chen, O.~Golubitsky, F.~Lemaire, M.~Moreno Maza, and W.~Pan.
\newblock Comprehensive triangular decomposition.
\newblock In {\em Computer Algebra in Scientific Computing}, (LNCS vol. 4770), pp~73--101. Springer Berlin Heidelberg, 2007.

\bibitem{CM12a}
C.~Chen and M.~Moreno Maza.
\newblock Algorithms for computing triangular decomposition of polynomial
  systems.
\newblock {\em J. Symb. Comp.}, 47(6):610--642, 2012.

\bibitem{CM12b}
C.~Chen and M.~Moreno Maza.
\newblock An incremental algorithm for computing cylindrical algebraic
  decompositions.
\newblock {\em Proc. ASCM '12}, 2012.  \\
To appear, Springer.  Preprint: \texttt{arXiv:1210.5543}.

\bibitem{CM14-qe}
C.~Chen and M.~{Moreno Maza}.
\newblock Quantifier elimination by cylindrical algebraic decomposition based
  on regular chains.
\newblock To appear: {\em Proc. ISSAC '14}, 2014.

\bibitem{CMXY09}
C.~Chen, M.~Moreno Maza, B.~Xia, and L.~Yang.
\newblock Computing cylindrical algebraic decomposition via triangular
  decomposition.
\newblock In {\em Proc. ISSAC '09}, pp~95--102. ACM, 2009.

\bibitem{Collins1975}
G.E. Collins.
\newblock Quantifier elimination for real closed fields by cylindrical
  algebraic decomposition.
\newblock In {\em Proc. 2nd GI Conference on Automata Theory and
  Formal Languages}, pp~134--183. Springer-Verlag, 1975.

\bibitem{CH91}
G.E. Collins and H.~Hong.
\newblock Partial cylindrical algebraic decomposition for quantifier
  elimination.
\newblock {\em J. Symb. Comp.}, 12:299--328, 1991.

\bibitem{DBEW12}
J.H. Davenport, R.~Bradford, M.~England, and D.~Wilson.
\newblock Program verification in the presence of complex numbers, functions
  with branch cuts etc.
\newblock In {\em Proc. SYNASC '12}, pp~83--88. IEEE, 2012.

\bibitem{DSS04}
A.~Dolzmann, A.~Seidl, and T.~Sturm.
\newblock Efficient projection orders for {CAD}.
\newblock In {\em Proc. ISSAC '04}, pp~111--118. ACM, 2004.

\bibitem{EBCDMW14}
M. England, R. Bradford, C. Chen, J.H. Davenport, M. Moreno Maza and D. Wilson.
\newblock Problem formulation for truth-table invariant cylindrical algebraic decomposition by incremental triangular decomposition.
\newblock To appear: {\em Proc. CICM '14}, (LNAI 8543), pp~46--60, Springer, 2014. 
Preprint: \texttt{arXiv:1404.6371}.

\bibitem{EBDW13}
M.~England, R.~Bradford, J.H. Davenport, and D.~Wilson.
\newblock Understanding branch cuts of expressions.
\newblock In {\em Intelligent Computer Mathematics} (LNCS vol. 7961), pp~136--151. Springer Berlin Heidelberg, 2013.

\bibitem{EBDW14}
M. England, R. Bradford, J.H. Davenport and D. Wilson.
\newblock Choosing a variable ordering for truth-table invariant cylindrical algebraic decomposition by incremental triangular decomposition.
\newblock To appear: {\em Proc. ICMS '14}.
Preprint: \texttt{arXiv:1405.6094}.

\bibitem{EWBD14}
M. England, D. Wilson, R. Bradford and J.H. Davenport.
\newblock Using the Regular Chains Library to build cylindrical algebraic decompositions by projection and lifting.
\newblock To appear: {\em  Proc. ICMS '14}, 2014.
Preprint: \texttt{arXiv:1405.6090}.

\bibitem{FPM05}
I.A.~Fotiou, P.A.~Parrilo, and M.~Morari.
\newblock Nonlinear parametric optimization using cylindrical algebraic
  decomposition.
\newblock In {\em Proc. Decision and Control, European Control Conference '05}, pp~3735--3740, 2005.

\bibitem{IYAY09}
H.~Iwane, H.~Yanami, H.~Anai, and K.~Yokoyama.
\newblock An effective implementation of a symbolic-numeric cylindrical
  algebraic decomposition for quantifier elimination.
\newblock In {\em Proc. SNC '09}, pp~55--64, 2009.

\bibitem{McCallum1999}
S.~McCallum.
\newblock On projection in {CAD}-based quantifier elimination with equational
  constraint.
\newblock In {\em Proc. ISSAC '99}, pp~145--149. ACM, 1999.

\bibitem{Paulson2012}
L.C. Paulson.
\newblock Metitarski: Past and future.
\newblock In L.~Beringer and A.~Felty, editors, {\em Interactive Theorem
  Proving}, (LNCS vol. 7406), pp~1--10. Springer, 2012.

\bibitem{PBD10}
N.~Phisanbut, R.J. Bradford, and J.H. Davenport.
\newblock Geometry of branch cuts.
\newblock {\em ACM Communications in Computer Algebra}, 44(3):132--135, 2010.

\bibitem{SS83II}
J.T. Schwartz and M.~Sharir.
\newblock On the ``{P}iano-{M}overs'' {P}roblem: {II.} {G}eneral techniques for
  computing topological properties of real algebraic manifolds.
\newblock {\em Adv. Appl. Math.}, 4:298--351, 1983.

\bibitem{Strzebonski06}
A.~Strzebo\'{n}ski.
\newblock Cylindrical algebraic decomposition using validated numerics.
\newblock {\em J. Symb. Comp.}, 41(9):1021--1038, 2006.

\bibitem{Strzebonski10}
A.~Strzebo\'{n}ski.
\newblock Computation with semialgebraic sets represented by cylindrical
  algebraic formulas.
\newblock In {\em Proc. ISSAC '10}, pp~61--68. ACM, 2010.

\bibitem{Wang2000}
D.~Wang.
\newblock Computing triangular systems and regular systems.
\newblock {\em J. Symb. Comp.}, 30(2):221--236, 2000.

\bibitem{WBDE14}
D.~Wilson, R.~Bradford, J.H. Davenport, and M.~England.
\newblock Cylindrical algebraic sub-decompositions.
\newblock To appear: {\em Math. in Comp. Sci.}, Springer, 2014.  \\
Preprint: \texttt{arXiv:1401.0647}.

\end{thebibliography}
\end{document}